\documentclass[reprint,amsmath,amssymb,aps,pre]{revtex4-1}

\usepackage{graphicx}  
\usepackage{dcolumn}   
\usepackage{bm}        
\usepackage{amssymb}   
\usepackage{mathtools}
\usepackage{amsmath}
\usepackage{color}
\usepackage{dblfloatfix}
\usepackage{comment}
\usepackage{amssymb,amsthm,amsfonts,amscd,mathrsfs}
\usepackage{thm-restate}
\usepackage{amsmath}
\usepackage{natbib}
\usepackage{paralist}

\usepackage{xcolor}         

\newtheorem{thm}{Theorem}
\newtheorem{prop}[thm]{Proposition}

\begin{document}

\title{Frequency propagation: Multi-mechanism learning in nonlinear physical networks}

\author{Vidyesh Rao Anisetti}
\email{vvaniset@syr.edu}
\affiliation{Physics Department, Syracuse University, Syracuse, NY 13244 USA}
\author{Ananth Kandala}
\email{an.kandala@ufl.edu}
\affiliation{Department of Physics, University of Florida, Gainesville, FL 32611 USA}
\author{Benjamin Scellier}
\email{benjamin.scellier@polytechnique.edu}
\affiliation{Department of Mathematics, ETH Z\"{u}rich, 8092 Z\"{u}rich, Switzerland}
\author{J.~M.~Schwarz}
\email{jmschw02@syr.edu}
\affiliation{Physics Department, Syracuse University, Syracuse, NY 13244 USA}
\affiliation{Indian Creek Farm, Ithaca, NY 14850 USA}

\begin{abstract}
We introduce \textit{frequency propagation}, a learning algorithm for nonlinear physical networks. In a resistive electrical circuit with variable resistors, an activation current is applied at a set of input nodes at one frequency, and an error current is applied at a set of output nodes at another frequency. The voltage response of the circuit to these boundary currents is the superposition of an `activation signal' and an `error signal' whose coefficients can be read in different frequencies of the frequency domain. Each conductance is updated proportionally to the product of the two coefficients. The learning rule is local and proved to perform gradient descent on a loss function.
We argue that frequency propagation is an instance of a \textit{multi-mechanism learning} strategy for physical networks, be it resistive, elastic, or flow networks. Multi-mechanism learning strategies incorporate at least two physical quantities, potentially governed by independent physical mechanisms, to act as activation and error signals in the training process. Locally available information about these two signals is then used to update the trainable parameters to perform gradient descent. We demonstrate how earlier work implementing learning via \textit{chemical signaling} in flow networks~\cite{anisetti2022learning} also falls under the rubric of multi-mechanism learning.
\end{abstract}

\maketitle

\section{Introduction}

Advancements in artificial neural networks (ANN) ~\cite{goodfellow2016deep} have inspired a search for adaptive physical networks that can be optimized to achieve desired functionality~\cite{anisetti2022learning,kendall2020training,stern_arinze_perez_palmer_murugan_2020,stern2021supervised,lopez2021self,dillavou2021demonstration,scellier2022agnostic,stern_review}. Similarly to ANNs, adaptive physical networks modify their learning degrees of freedom to approximate a desired input-to-output function ; but unlike ANNs, they achieve this using physical laws. In a physical network, the input is typically an externally applied boundary condition, and the output is the network's response to this input, or a statistic of this response. For instance, in a resistive network, an input signal can be fed in the form of applied currents or voltages, and the output may be the vector of voltages across a subset of nodes of the network. The learning degrees of freedom of the network are, for example, the conductances of the resistors (assuming variable resistors). Ideally, these learning parameters must be updated using only locally available information, so that the network can learn without the need for an external supervisor beyond what occurs at the input and the output nodes.
Moreover, these parameter updates should preferably follow the direction of gradient descent in the loss function landscape, as is the case for ANNs.

Existing learning algorithms for adaptive physical networks include \textit{equilibrium propagation} \citep{kendall2020training,scellier2021deep} and variants of it \citep{stern2021supervised}. These algorithms are based on the idea of \textit{contrastive learning} \cite{baldi1991contrastive} and proceed as follows. In a first phase, an input is presented to the network, either in the form of boundary currents or voltages, and the network is allowed to settle to equilibrium (the `free state'), where a supervisor checks the output of the system. Then the supervisor \textit{nudges} the output
towards the desired output. This perturbation causes the system to settle to a new (`perturbed') equilibrium state, which is a slightly more accurate approximation of the function that one wants to learn. The supervisor then compares the perturbed state with the free state to make changes in the learning degrees of freedom in such a way that the network spontaneously produces an output that is slightly closer to the desired output.
In the limit of infinitesimal nudging, this procedure performs gradient descent on the squared prediction error \cite{scellier2017equilibrium}.

The above procedure requires storing the free state, to compare it with the perturbed state ; in \cite{dillavou2021demonstration}, this is achieved by using two copies of the same network. In this work, we propose an alternative \textit{multi-mechanism learning} approach to overcome this hurdle. It incorporates two physical quantities, each driven by their own respective mechanisms: one quantity acting as an \textit{activation} signal and the other acting as an \textit{error} signal. This concept is motivated by biological systems implementing functionality via multiple biophysical routes or mechanisms. Such functionality can be chemical, electrical or even mechanical in nature with potential interactions between such mechanisms. For instance, in the brain, activity can propagates from one cell to another via electrical and chemical synapses, as opposed to just one mechanism, if you will~\cite{pereda2014}. Given this modularity in functionality in biology, it would be remiss not to explore such richness in how adaptive physical networks learn.
Alternatively, as we shall soon see, the modularity is not necessarily in terms of mechanical versus chemical versus electrical signals, but distinguishable
signals.

We introduce \textit{frequency propagation} (Freq-prop), a physical learning algorithm falling under the umbrella concept of multi-mechanism learning. In Freq-prop, the activation and error signals are both sent through a single channel, but are encoded in different frequencies of the frequency domain ; we can thus obtain the respective responses of the system through frequency (Fourier) decomposition. This algorithm, which we show to perform gradient descent, can be used to train adaptive non-linear networks such as resistor networks, elastic networks and flow networks. Freq-prop thus has the potential to be an all-encompassing approach.

\section{Nonlinear Resistive Networks}
\label{sec:resistive-network}

A resistive network is an electrical circuit of nodes interconnected by resistive devices, which includes linear resistors and diodes. Let $N$ be the number of nodes in the network, and denote $v_j$ the electric potential of node $j$. A subset of the nodes are \textit{input nodes}, where we can set input currents: we denote $x_j$ the input current at input node $j$. For each pair of nodes $j$ and $k$, we denote $\theta_{jk}$ the conductance of the resistor between these nodes (provided that the corresponding branch contains a linear resistor). We further denote $\theta = \{ \theta_{jk} : \text{linear branch } (j,k) \}$ the vector of conductances, and $x = (x_1, x_2, \ldots, x_N)$ the vector of input currents, where by convention $x_j=0$ if node $j$ is not an input node. Finally, we denote $v = (v_1, v_2, \ldots, v_N)$ the configuration of the nodes' electric potentials, and $v(\theta,x)$ the equilibrium value of $v$ as a function of the branch conductances ($\theta$) and the input currents ($x$).

The following result, known since the work of Millar \cite{millar1951cxvi}, provides a characterization of the equilibrium state -- see also \cite{kendall2020training} for a proof of this result with notations closer to ours.

\begin{thm}
\label{thm:co-content}
There exists a function $E(\theta,x,v)$ such that
\begin{equation}
    \label{eq:free-state}
    v(\theta,x) = \underset{v}{\arg \min} \; E(\theta,x,v).
\end{equation}
Furthermore, $E(\theta,x,v)$ is of the form
\begin{align}
    \label{eq:energy-function-1}
    E(\theta,x,v) & = E_{\rm input}(x,v) + E_{\rm nonlinear}(v) \\
    \label{eq:energy-function-2}
    & + \sum_{{\rm linear \; branch \;} (j,k)} \frac{1}{2} \theta_{jk} \left( v_j-v_k \right)^2,
\end{align}
where $E_{\rm input}(x,v)$ is a function of $x$ and $v$, and $E_{\rm nonlinear}(v)$ is a function of $v$ only.
\end{thm}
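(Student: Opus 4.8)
The plan is to exhibit the function $E$ explicitly, show that its stationarity condition $\nabla_v E = 0$ is precisely Kirchhoff's current law at every node, and then argue that $E$ is convex so that its unique stationary point is the global minimizer and therefore coincides with the physical equilibrium $v(\theta,x)$. The starting observation is that the equilibrium state is characterized by current conservation: at each node $j$, the injected current $x_j$ must equal the total current flowing out through the branches incident to $j$. For a linear branch $(j,k)$ this outgoing current is $\theta_{jk}(v_j-v_k)$, while for a nonlinear branch it is $g_{jk}(v_j-v_k)$, where $g_{jk}$ is the (monotone) current--voltage characteristic of the device on that branch.

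I would construct $E$ term by term so that differentiating with respect to $v_j$ reproduces each of these currents. Taking $E_{\rm input}(x,v) = -\sum_j x_j v_j$ gives $\partial E_{\rm input}/\partial v_j = -x_j$; taking the linear contribution $\tfrac{1}{2}\theta_{jk}(v_j-v_k)^2$ gives $\theta_{jk}(v_j-v_k)$ under $\partial/\partial v_j$; and defining $E_{\rm nonlinear}(v) = \sum_{\text{nonlinear }(j,k)} G_{jk}(v_j-v_k)$, with $G_{jk}$ an antiderivative (the ``co-content'') of $g_{jk}$, yields $g_{jk}(v_j-v_k)$. Summing these contributions, the condition $\partial E/\partial v_j = 0$ reads
\begin{equation}
x_j = \sum_{k} \theta_{jk}(v_j-v_k) + \sum_{k} g_{jk}(v_j-v_k),
\end{equation}
which is exactly Kirchhoff's current law at node $j$. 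Hence the equilibrium state is a stationary point of $E$, and $E$ has the claimed decomposition, with $E_{\rm input}$ depending on $x$ and $v$ and $E_{\rm nonlinear}$ depending on $v$ alone.

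It remains to promote ``stationary point'' to ``global minimizer.'' The linear term is convex as a nonnegative quadratic form (physical conductances satisfy $\theta_{jk} \geq 0$), and $E_{\rm input}$ is affine in $v$; the crucial input is that each $G_{jk}$ is convex, which holds precisely because the device characteristic $g_{jk} = G_{jk}'$ is nondecreasing --- the physical statement that these are passive, monotone elements. Convexity of $E$ then guarantees that any stationary point is a global minimizer, so $v(\theta,x) = \arg\min_v E(\theta,x,v)$. I expect the main obstacle to be this convexity step: it hinges entirely on the monotonicity of the nonlinear characteristics, and one must verify that the admissible devices (e.g.\ diodes) indeed have monotone $I$--$V$ curves so that the co-content $G_{jk}$ is convex. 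A secondary technical point is uniqueness of the minimizer: since $E$ depends on $v$ only through branch differences together with the term $-\sum_j x_j v_j$, a uniform shift of all potentials leaves $E$ invariant whenever $\sum_j x_j = 0$, so the minimizer is unique only after grounding a reference node, which I would state explicitly.
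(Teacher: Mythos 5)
Your proof is correct and follows essentially the same route as the paper: the paper constructs exactly the same branch-wise co-content energy (your $G_{jk}$ is the paper's $E_{jk}(v_j-v_k)=\int_0^{v_j-v_k}\widehat{i}_{jk}(v^\prime)dv^\prime$, and your $E_{\rm input}=-\sum_j x_j v_j$ differs from the paper's $+\sum_j x_j v_j$ only by the sign convention for the injected currents), and it defers the variational claim itself to Millar and to Kendall et al., whose argument is precisely your combination of Kirchhoff's-current-law-as-stationarity with convexity of the co-content for monotone device characteristics. The two caveats you raise are genuine hypotheses left implicit in the paper's statement --- convexity hinges on the nonlinear elements having nondecreasing current--voltage curves, and with pure current inputs the minimizer is defined only up to a uniform shift of potentials (requiring a grounded reference node and $\sum_j x_j=0$) --- though note your argument only treats the boundary-current case, while the paper's voltage-clamped input term $E_{\rm input}^{\rm voltage}$ is handled identically since the indicator of an affine constraint is also convex.
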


$E(\theta,x,v)$ is the `energy function' of the system, also called the \textit{co-content} \cite{millar1951cxvi}, and the equilibrium state $v(\theta,x)$ is a minimizer of the energy function. The energy function contains an energy term $E_{\rm input}(x,v)$ associated to boundary input currents $x$. It also contains energy terms of the form $\theta_{jk} \left( v_j-v_k \right)^2$ representing the power dissipated in branch $(j,k)$. The term $E_{\rm nonlinear}(v)$ contains all nonlinearities of the system. 
In a \textit{linear} resistance network (i.e. when $E_{\rm nonlinear}(v) = 0$), it is well known that the equilibrium configuration of node electric potentials minimizes the power dissipation ;
Theorem \ref{thm:co-content} generalizes this result to nonlinear networks.
Below we explain how the different terms of $E(\theta,x,v)$ are constructed.

\smallskip

\textbf{Constructing the energy function.}
Each branch is characterised by its current-voltage characteristic $i_{jk}=\widehat{i}_{jk}(v_j-v_k)$, where $\widehat{i}_{jk}(\cdot)$ is a real-valued function that returns $i_{jk}$, the current flowing from $j$ to $k$ in response to the voltage $v_j-v_k$. The energy term corresponding to branch $(j,k)$, called the \textit{co-content} of the branch \cite{millar1951cxvi}, is by definition
\begin{equation}
    E_{jk}(v_j-v_k) = \int_0^{v_j-v_k} \widehat{i}_{jk}(v^\prime)dv^\prime.
\end{equation}
In general, the characteristic function $\widehat{i}_{jk}(\cdot)$ is arbitrary, i.e. \textit{nonlinear}. However, some branches are \textit{linear}, meaning that their current-voltage characteristic is of the form $i_{jk} = \theta_{jk} \left( v_j-v_k \right)$, where $\theta_{jk}$ is the branch conductance \footnote{To avoid any confusion, we stress that $\theta_{jk}$ is a scalar, whereas $\hat{i}_{jk}(\cdot)$ is a real-valued function. Thus, $\theta_{jk} \left( v_j-v_k \right)$ denotes the product of $\theta_{jk}$ and $v_j-v_k$, whereas $\hat{i}_{jk}(v_j-v_k)$ denotes the function $\hat{i}_{jk}$ applied to the voltage $v_j-v_k$.}.
For such linear branches, the energy term is
\begin{equation}
    E_{jk}(v_j-v_k) = \frac{1}{2} \theta_{jk} \left( v_j-v_k \right)^2,
\end{equation}
which is the power dissipated in branch $(j,k)$.

We gather all the energy terms of nonlinear branches under a unique term:
\begin{equation}
    E_{\rm nonlinear}(v) = \sum_{\text{nonlinear branch } (j,k)} E_{jk}(v_j-v_k),
\end{equation}
where we recall that $v = (v_1, v_2, \ldots, v_N)$.

As for the energy term $E_{\rm input}(x,v)$, we present two ways to impose boundary conditions to the network to feed it with input signals $x$, either in the form of boundary currents or boundary electric potentials.
Recall that we write $x = (x_1, x_2, \ldots, x_N)$ the vector of input signals, where $x_j=0$ if node $j$ is not an input node.
In the case of boundary currents, the corresponding energy term is
\begin{equation}
    E_{\rm input}^{\rm current}(x,v) = \sum_{j \in \{ \text{input nodes} \}} x_j v_j,
\end{equation}
whereas in the case of boundary electric potentials, the energy term is
\begin{equation}
E_{\rm input}^{\rm voltage}(x,v) = 
\left\{
\begin{array}{l}
    \displaystyle 0 \quad \text{if} \quad v_j = x_j, \; \forall j \in \{ \text{input nodes} \}, \\
   \displaystyle +\infty \quad \text{otherwise},
\end{array}
\right.
\end{equation}
i.e. the electric potential $v_j$ is clamped to $x_j$ for every input node $j$ (so that the energy remains finite).

Putting all the energy terms together, and denoting $E_{\rm input}(x,v)$ the energy term of input signals (either $E_{\rm input}^{\rm current}(x,v)$ or $E_{\rm input}^{\rm voltage}(x,v)$ depending on the case), we get the energy function of Eq.~(\ref{eq:energy-function-1}-\ref{eq:energy-function-2}).

\section{
Multi-Mechanism
Learning via Frequency Propagation}
\label{sec:frequency-propagation}

Learning in a resistive network consists in adjusting the branch conductances ($\theta$) so that the network exhibits a desired behaviour, i.e. a desired input-output function $x \mapsto v(\theta,x)$. In machine learning, this problem is formalized by introducing a \textit{cost function} $C$. Given an input-output pair $(x,y)$, the quantity $C(v(\theta,x),y)$ measures the discrepancy between the `model prediction' $v(\theta,x)$ and the desired output $y$. The learning objective is to find the parameters $\theta$ that minimize the expected cost $\mathbb{E}_{(x,y)} \left[ C(v(\theta,x),y) \right]$ over input-output pairs $(x,y)$ coming from the data distribution for the task that the system must solve.


In deep learning, the main tool for this optimization problem is stochastic gradient descent (SGD)~\cite{bottou2010large}: at each step we pick at random an example $(x,y)$ from the training set and update the parameters as
\begin{equation}
    \Delta \theta = - \eta \frac{\partial \mathcal{L}}{\partial \theta}(\theta,x,y),
\end{equation}
where $\eta$ is a step size, and
\begin{equation}
    \mathcal{L}(\theta,x,y) = C(v(\theta,x),y)
\end{equation}
is the per-example \textit{loss function}.

We now present \textit{frequency propagation} (Freq-prop), a learning algorithm for physical networks whose update rule performs SGD. Freq-prop proceeds by modifying the energy of the network to push or pull away the network's output values from the desired outputs. In the case of a resistive network (Section~\ref{sec:resistive-network}), we inject sinusoidal currents at the output nodes of the network, $i(t) = \gamma \sin(\omega t) \, \frac{\partial C}{\partial v}(v,y)$, where $t$ denotes time, $\omega$ is a frequency, and $\gamma$ is a small positive constant\footnote{In practical situations such as the squared error prediction, the cost function $C$ depends only on the state of output nodes ; therefore nudging requires injecting currents at output nodes only.}. This amounts to augment the energy function of the system by a time-dependent sinusoidal energy term $\gamma\, \sin(\omega t) \, C(v,y)$. Due to this perturbation, the system's response $v(t)$ minimizing the energy at time $t$ is
\begin{equation}
    \label{eq:response}
    v(t) = \underset{v}{\arg \min} \; \left[ E(\theta,x,v) + \gamma \, \sin(\omega t) \, C(v,y) \right].
\end{equation}
The response $v(t)$ is periodic of period $T=2\pi/\omega$, and for small perturbations (i.e. $\gamma \ll 1$), it is approximately sinusoidal. Next, we assume that we can recover the first two vectors of Fourier coefficients of $v(t)$, i.e. the vectors $a$ and $b$ such that
\begin{equation}
    \label{eq:definition-a-b}
    a = \frac{1}{T} \int_0^T v(t)dt, \qquad b = \frac{2}{T} \int_0^T v(t) \sin(\omega t) dt.
\end{equation}
Finally, denoting $a = (a_1,a_2,\ldots,a_N)$ and $b = (b_1,b_2,\ldots,b_N)$, we update each parameter $\theta_{jk}$ according to the learning rule
\begin{equation}
    \label{eq:learning-rule-frequence-prop}
    \Delta \theta_{jk} = - \alpha (b_j-b_k) \cdot (a_j-a_k),
\end{equation}
where $\alpha$ is a positive constant.

\begin{thm}
\label{thm:frequency-propagation}
For every parameter $\theta_{jk}$, we have
 \begin{equation}
    \Delta \theta_{jk} = - \alpha \, \gamma \frac{\partial \mathcal{L}}{\partial \theta_{jk}}(\theta,x,y) + O(\gamma^3)
\end{equation}
when $\gamma \to 0$.
\end{thm}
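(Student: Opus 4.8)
The plan is to recognize the time-dependent response $v(t)$ of Eq.~(\ref{eq:response}) as a \emph{nudged equilibrium} in the sense of equilibrium propagation, and then to read off the learning rule from a Taylor expansion in the nudging strength combined with the orthogonality of the Fourier integrals. Concretely, for a scalar nudge $\beta$ I would define
\begin{equation}
v^\beta = \underset{v}{\arg\min}\;\left[ E(\theta,x,v) + \beta\,C(v,y) \right],
\end{equation}
so that $v^0 = v(\theta,x)$ is the free state and $v(t) = v^{\gamma\sin(\omega t)}$. The proof then rests on two ingredients: relating the first $\beta$-derivative of $v^\beta$ to the loss gradient, and identifying which Fourier coefficient isolates that derivative.

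First I would establish the equilibrium-propagation gradient formula. Let $\Phi(\theta,\beta) = \min_v\left[ E(\theta,x,v) + \beta\,C(v,y) \right]$. Because $v^\beta$ minimizes the bracket over $v$, the envelope theorem kills the implicit $v$-dependence and yields
\begin{equation}
\frac{\partial \Phi}{\partial \theta_{jk}} = \tfrac{1}{2}\left( v_j^\beta - v_k^\beta \right)^2, \qquad \frac{\partial \Phi}{\partial \beta} = C(v^\beta,y),
\end{equation}
where the first identity uses the explicit linear-branch energy term from Theorem~\ref{thm:co-content}. Equating the mixed second partials $\partial_\beta\partial_{\theta_{jk}}\Phi = \partial_{\theta_{jk}}\partial_\beta\Phi$ at $\beta=0$, the left side becomes $(v_j^0-v_k^0)(v_j^\prime-v_k^\prime)$ and the right side becomes $\partial_{\theta_{jk}}C(v^0,y) = \partial_{\theta_{jk}}\mathcal{L}$, giving
\begin{equation}
\label{eq:ep-formula}
\frac{\partial \mathcal{L}}{\partial \theta_{jk}}(\theta,x,y) = \left( v_j^0 - v_k^0 \right)\left( v_j^\prime - v_k^\prime \right), \qquad v^\prime := \left.\frac{\partial v^\beta}{\partial \beta}\right|_{\beta=0}.
\end{equation}

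Next I would Taylor expand in the nudge, $v^\beta = v^0 + \beta\,v^\prime + \tfrac{1}{2}\beta^2 v^{\prime\prime} + O(\beta^3)$, substitute $\beta = \gamma\sin(\omega t)$, and evaluate the coefficients of Eq.~(\ref{eq:definition-a-b}) using the elementary averages $\frac{1}{T}\int_0^T \sin(\omega t)\,dt = 0$, $\frac{1}{T}\int_0^T \sin^2(\omega t)\,dt = \tfrac{1}{2}$, and $\frac{1}{T}\int_0^T \sin^3(\omega t)\,dt = 0$. The DC component then reads $a = v^0 + O(\gamma^2)$ and the in-phase component reads $b = \gamma\,v^\prime + O(\gamma^3)$. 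Inserting these into the rule~(\ref{eq:learning-rule-frequence-prop}) and tracking orders gives
\begin{equation}
\Delta\theta_{jk} = -\alpha\left( b_j - b_k \right)\left( a_j - a_k \right) = -\alpha\,\gamma\left( v_j^\prime - v_k^\prime \right)\left( v_j^0 - v_k^0 \right) + O(\gamma^3),
\end{equation}
which combined with Eq.~(\ref{eq:ep-formula}) is precisely the claim.

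The step I expect to be the main obstacle is the order bookkeeping that certifies an $O(\gamma^3)$ rather than an $O(\gamma^2)$ remainder. This hinges on two points: the vanishing of $\frac{1}{T}\int_0^T \sin^3(\omega t)\,dt$, which removes the would-be $O(\gamma^2)$ contribution of $v^{\prime\prime}$ to $b$ and leaves $b = \gamma v^\prime + O(\gamma^3)$; and the smoothness of $\beta\mapsto v^\beta$ needed to justify the Taylor expansion, which I would secure via the implicit function theorem applied to the stationarity condition $\partial_v E(\theta,x,v^\beta) + \beta\,\partial_v C(v^\beta,y) = 0$, assuming the Hessian $\partial_v^2 E$ is nonsingular at the free state (guaranteed by strict convexity of the co-content). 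Once the $\gamma^2$ cross-terms are shown to cancel, the product $(b_j-b_k)(a_j-a_k)$ inherits the clean $O(\gamma^3)$ error.
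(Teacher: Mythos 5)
Your proposal is correct and follows essentially the same route as the paper's proof: a second-order Taylor expansion of the nudged equilibrium $v^{\gamma\sin(\omega t)}$, extraction of $a = v^0 + O(\gamma^2)$ and $b = \gamma\, v' + O(\gamma^3)$ from the Fourier integrals (including the crucial vanishing of $\frac{1}{T}\int_0^T \sin^3(\omega t)\,dt$, which is what secures the $O(\gamma^3)$ rather than $O(\gamma^2)$ remainder), followed by the equilibrium-propagation gradient formula to identify $(a_j-a_k)(b_j-b_k)$ with $\gamma\,\partial\mathcal{L}/\partial\theta_{jk}$ up to $O(\gamma^3)$. The only difference is one of self-containedness: where the paper simply cites the equilibrium-propagation formula (Theorem 2.1 in \cite{scellier2021deep}) and then applies the chain rule together with the explicit form of the energy, you rederive that formula from scratch via the envelope theorem and the equality of mixed partials of $\Phi(\theta,\beta)=\min_v\left[E(\theta,x,v)+\beta\,C(v,y)\right]$, and you also flag the implicit-function-theorem hypothesis (nonsingular Hessian at the free state) that justifies the Taylor expansion, which the paper leaves implicit.
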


Namely, the learning rule \eqref{eq:learning-rule-frequence-prop} approximates one step of gradient descent with respect to the loss, with learning rate $\alpha \, \gamma$. Note that this learning rule is local: it requires solely locally available information for each parameter $\theta_{jk}$.

\begin{proof}[Proof of Theorem \ref{thm:frequency-propagation}]
Let $\theta$, $x$ and $y$ be fixed. For every $\beta \in \mathbb{R}$, we denote
\begin{equation}
    v_\star^\beta = \underset{v}{\arg \min} \; \left[ E(\theta,x,v) + \beta \, C(v,y) \right].
\end{equation}
With this notation, note that the response $v(t)$ of Eq.~\eqref{eq:response} rewrites $v(t) = v_\star^{\gamma\sin(\omega t)}$. Let us write the second-order Taylor expansion of $v_\star^\beta$ around $\beta=0$:
\begin{equation}
    \label{eq:taylor}
    v_\star^{\beta} = v_\star^0 + \beta \left. \frac{\partial v_\star^\beta}{\partial \beta} \right|_{\beta=0} + \frac{\beta^2}{2} \left. \frac{\partial^2 v_\star^\beta}{\partial \beta^2} \right|_{\beta=0} + O(\beta^3),
\end{equation}
where $v_\star^0 = v(\theta,x)$ by definition \eqref{eq:free-state}, and $\left. \frac{\partial v_\star^\beta}{\partial \beta} \right|_{\beta=0}$ and $\left. \frac{\partial^2 v_\star^\beta}{\partial \beta^2} \right|_{\beta=0}$ denote the derivative and second-derivative of $v_\star^\beta$ at $\beta=0$. Taking $\beta = \gamma \sin(\omega t)$ in the above formula, we get
\begin{align}
    v(t) = v_\star^{\gamma \sin(\omega t)} & = v_\star^0 + \gamma \sin(\omega t) \left. \frac{\partial v_\star^\beta}{\partial \beta} \right|_{\beta=0} \\
    & + \frac{\gamma^2}{2} \sin(\omega t)^2 \left. \frac{\partial^2 v_\star^\beta}{\partial \beta^2} \right|_{\beta=0} + O(\gamma^3),
\end{align}
uniformly in $t$. Therefore, the first two vectors of Fourier coefficients $a$ and $b$ of the periodic function $v(t)$, with time period $T=2\pi/\omega$ are
\begin{align}
    \label{eq:coefficient-a}
    a & =\frac{1}{T} \int_0^T v(t)dt = v_\star^0 + \frac{\gamma^2}{4} \left. \frac{\partial^2 v_\star^\beta}{\partial \beta^2} \right|_{\beta=0} + O(\gamma^3), \\
    b & =\frac{2}{T} \int_0^T v(t) \sin(\omega t) dt = \gamma \left. \frac{\partial v_\star^\beta}{\partial \beta} \right|_{\beta=0} + O(\gamma^3).
    \label{eq:coefficient-b}
\end{align}
Next, we know from the equilibrium propagation formula (Theorem 2.1 in \cite{scellier2021deep}) that the gradient of the loss $\mathcal{L}$ is equal to
\begin{equation}
    \frac{\partial \mathcal{L}}{\partial \theta}(\theta,x,y) = \left. \frac{d}{d\beta} \right|_{\beta=0} \frac{\partial E}{\partial \theta}(\theta,x,v_\star^\beta).
\end{equation}
Therefore,
\begin{equation}
    \frac{\partial \mathcal{L}}{\partial \theta}(\theta,x,y) = \frac{\partial^2 E}{\partial \theta \partial v}(\theta,x,v_\star^0) \cdot \left. \frac{\partial v_\star^\beta}{\partial \beta} \right|_{\beta=0}.
\end{equation}
Multiplying both sides by $\gamma$, and using \eqref{eq:coefficient-b}, we get
\begin{equation}
    \gamma \frac{\partial \mathcal{L}}{\partial \theta}(\theta,x,y) = \frac{\partial^2 E}{\partial \theta \partial v}(\theta,x,v_\star^0) \cdot b + O(\gamma^3).
\end{equation}
Finally, given the form of the energy function \eqref{eq:energy-function-1}, and using $b = O(\gamma)$ and $v_\star^0=a+O(\gamma^2)$ from Eq.~\eqref{eq:coefficient-a}, we get for every parameter $\theta_{jk}$
\begin{equation}
    \gamma \frac{\partial \mathcal{L}}{\partial \theta_{jk}}(\theta,x,y) = \left( a_j-a_k \right) \cdot \left( b_j-b_k \right) + O(\gamma^3).
\end{equation}
Therefore the learning rule
\begin{equation}
    \Delta \theta_{jk} = - \alpha (b_j-b_k) \cdot (a_j-a_k)
\end{equation}
satisfies
\begin{equation}
    \Delta \theta_{jk} = - \alpha \, \gamma \frac{\partial \mathcal{L}}{\partial \theta_{jk}}(\theta,x,y) + O(\gamma^3).
\end{equation}
Hence the result.
\end{proof}

\textbf{Remark 1.}
For simplicity, we have omitted the time of relaxation to equilibrium in our analysis. However, a practical circuit has an effective capacitance $C_{\rm eff}$ and therefore will equilibrate in time $\tau_{\rm relax} \sim R_{\rm eff} C_{\rm eff}$, where $R_{\rm eff}$ is the effective resistance of the circuit. Our learning algorithm will work as long as the circuit equilibrates much faster than the timescale of oscillation ($\tau_{\rm relax} \ll 1/\omega$). Our analysis thus requires that $C_{\rm eff}$ be small enough for the assumption $\tau_{\rm relax} \ll 1/\omega$ to hold. We leave the study of the regime where $C_{\rm eff}$ is non negligible for future work.

\textbf{Remark 2.}
While our nudging method \eqref{eq:response} is inspired by the one of \textit{equilibrium propagation} \cite{scellier2017equilibrium,kendall2020training}, it is also possible to apply the nudging variant of \textit{coupled learning} \cite{stern2021supervised} which might be easier to implement in practice \cite{dillavou2021demonstration}. To do this, we denote $v_O^F$ the `free' equilibrium value of the output nodes of the network (where the prediction is read), without nudging. Then, at time $t$, we \textit{clamp} the output nodes to $v_O^C(t) = v_O^F + \gamma \sin(\omega t) (y-v_O^F)$. This nudging method can be achieved via AC voltage sources at output nodes. We note however that Theorem~\ref{thm:frequency-propagation} does not hold with this alternative nudging method.

\textbf{Remark 3.}
Measuring $b_j$ for every node $j$ as per Eq.~\eqref{eq:definition-a-b} requires that we use the same reference time $t=0$ for all nodes, i.e. it requires global synchronization of the measurements for all nodes.
However, in practice, there may be a time delay $t_j$ between nudging and measurement, leading to a measured response $v_j(t) = a_j + b_j \sin(\omega (t+t_j)) + O(\gamma^3)$ at node $j$. Without any information about $t_j$, we can only obtain the absolute value of the coefficient $b_j$, not its sign. 
We propose a solution to this issue in section~\ref{sec:phase-information-problem}.

\section{Choice of the nudging signal}
\label{sec:phase-information-problem}

We have seen in section \ref{sec:frequency-propagation} that, using a sinusoidal nudging signal $\gamma \, \sin(\omega t) \, C(v,y)$, the measured response at node $j$ will be of the form $v_j(t) = a_j + b_j \sin(\omega (t+t_j)) + O(\gamma^3)$, where $t_j$ is the time delay between nudging and measurement. Unfortunately, it is not possible to recover the sign of $b_j$ without any knowledge of $t_j$. This problem can be overcome by using a different nudging signal.

In general, if we nudge the system by an energy term $\gamma f(t) \, C(v,y)$, where $f(t)$ is an arbitrary function such that $\sup_t |f(t)| < \infty$, then the system's response at node $j$ will be of the form $v_j(t) = a_j + b_j f(t+t_j) + O(\gamma^2)$.
Our goal is to choose a $f$ so that we can obtain for every node $j$ the values of $a_j$ and $b_j$ by measuring only $v_j(t)$, without knowing $t_j$.

Clearly, this is not possible for all functions $f$. For example, if $f(\cdot)$ is a constant, then $v_j(\cdot)$ is also a constant, and we cannot recover the values of $a_j$ and $b_j$ from $v_j(\cdot)$ alone. As seen above, another example for which this is not possible is $f(t)=\sin(\omega t)$. This is because a time delay $t_j=\pi/\omega$ will change the sign of the signal, $\sin(\omega (t+t_j))=-\sin(\omega t)$ ; therefore the sign of $b_j$ cannot be recovered without any knowledge of $t_j$.

An example of a nudging signal for which we can infer the values of $a_j$ and $b_j$ (up to $O(\gamma^2)$) is $f(t) = |\sin(\omega (t))|$. To do this, we observe the response at node $j$
\begin{equation}
    v_j(t) = a_j + b_j |\sin(\omega (t+t_j))| + O(\gamma^2)
\end{equation}
for a duration $\tau_{\mathrm{obs}}$ greater than the time period of the signal $T=2\pi/\omega$. The coefficients $a_j$ and $b_j$ can be obtained by identifying the times where the signal's derivative is zero or is discontinuous. Specifically, denoting $\partial_{+} v_j(t)$ and $\partial_{-} v_j(t)$ the left and right derivatives of the signal at time $t$, we have
\begin{align}
    & a_j = v_j(t_1) + O(\gamma^2) ~\mathrm{where}~\partial_{+} v(t_1) \neq \partial_{-} v(t_1), \\
    & b_j = v_j(t_2)-v_j(t_1) + O(\gamma^2) ~\mathrm{where}~ \partial v(t_2) = 0.
\end{align}

More generally, we show in Appendix \ref{sec:sign-problem} that, in principle, it is possible to recover the coefficients $a_j$ and $b_j$ if and only if the function $f$ has the property that there is no $\tau$ such that $f(t)=\sup{f}+\inf{f}-f(t+\tau)$ for every $t$. In other words, no amount of time delay converts the signal's `upright' form to its `inverted' form or vice versa.



\section{
General Applicability of
Frequency Propagation
}

Freq-prop applies to arbitrary physical networks: not only resistive networks,
but also flow networks,
capacitive networks and inductive networks, 
among others. In these networks, the notion of current-voltage characteristics will be replaced by current-pressure characteristics, current-flux characteristics, and charge-voltage characteristics, respectively. The mathematical framework for nonlinear elements (Section~\ref{sec:resistive-network}) also applies to these networks, where the energy functions minimized at equilibrium are the co-content, the inductive energy and the capacitive co-energy, respectively~\cite{millar1951cxvi,cherry1951cxvii}.

To emphasize the generality of Freq-prop, we present it here in the context of \textit{central force spring networks} (or `elastic networks') \citep{stern2021supervised}. We consider an elastic network of $N$ nodes interconnected by springs. The elastic energy stored in the spring connecting node $i$ to node $j$ is $E_{ij}(r_{ij}) = \frac{1}{2} k_{ij} \left( r_{ij} - \ell_{ij} \right)^2$, where $k_{ij}$ is the spring constant, $\ell_j$ is the spring's length at rest, and $r_{ij}$ is the distance between nodes $i$ and $j$. Nonlinear springs are also allowed and their energy terms are gathered in a unique term $E_{\rm nonlinear}(r)$. Thus, the total elastic energy stored in the network, which is minimized, is given by
\begin{equation}
    E(\theta,r) = \frac{1}{2} \sum_{i,j} k_{ij} \left( r_{ij} - \ell_{ij} \right)^2 + E_{\rm nonlinear}(r),
\end{equation}
where $\theta = \{ k_{ij}, \ell_{ij} \}$ is the set of adjustable parameters, and $r = \{ r_{ij} \}$ plays the role of state variable.

In this setting as in the case of resistive networks, we apply a nudging signal $\gamma\, \sin(\omega t) \, C(r,y)$ at the output part of the network, we observe the response $r(t)$, and we assume that we can recover the first two vectors of Fourier coefficients of $r(t)$, i.e. the vectors $a$ and $b$ such that $a = \frac{1}{T} \int_0^T r(t)dt$ and $b = \frac{2}{T} \int_0^T r(t) \sin(\omega t) dt$. Then, the learning rules for the spring constant $k_{ij}$ and the spring's length at rest $\ell_{ij}$ read, in this context,
\begin{equation}
    \Delta k_{ij} = - \alpha \, b_{ij} \, (a_{ij}-\ell_{ij}), \qquad \Delta \ell_{ij} = - \alpha \, k_{ij} \, b_{ij}.
\end{equation}
Theorem \ref{thm:frequency-propagation} generalizes to this setting ; the above learning rules perform stochastic gradient descent on the loss: $\Delta \theta = - \alpha \gamma \frac{\partial \mathcal{L}}{\partial \theta}(\theta,x,y) + O(\gamma^3)$.

\section{Related Work}

Frequency propagation builds on learning via \textit{chemical signaling} \cite{anisetti2022learning}, which is another example of multi-mechanism learning (MmL) in physical networks. Whereas MmL via frequency propagation uses two different frequencies to play the role of the \textit{activation} and \textit{error} signals during training, MmL via chemical signaling uses two different chemical concentrations for these signals. \cite{anisetti2022learning} presents learning via chemical signaling in the setting of linear flow networks, which we extend here to the nonlinear setting (Appendix~\ref{sec:chemical-signaling}).

Freq-prop is also related to \textit{equilibrium propagation} (EP) \cite{scellier2017equilibrium,kendall2020training,scellier2021deep} and \textit{coupled learning} \cite{stern2021supervised}. To see the relationship with these algorithms, we consider the case of resistive networks (section \ref{sec:resistive-network}). Denote $v_{jk} = v_j-v_k$ the voltage across branch $(j,k)$. Further denote $v^\beta = \arg\min_{v} \; \left[ E(\theta,x,v) + \beta \, C(v,y) \right]$ for any $\beta \in \mathbb{R}$. Based on a result from \cite{scellier2017equilibrium}, \cite{kendall2020training} proved that the learning rule
\begin{equation}
    \label{eq:eqprop-learning-rule}
    \Delta^{\rm EP} \theta_{jk} = \frac{\alpha}{2} \left( (v^0_{jk})^2 - (v^\beta_{jk})^2  \right)
\end{equation}
performs gradient descent with step size $\alpha \beta$, up to $O(\beta^2)$. We note that the right-hand side of \eqref{eq:eqprop-learning-rule} is also equal to $\alpha \, v^0_{jk} \left( v^0_{jk} - v^\beta_{jk} \right) + O(\beta^2)$, showing that the gradient information is contained in the physical quantities $v^0$ and $\left. \frac{\partial v^\beta}{\partial \beta} \right|_{\beta=0}$. These quantities correspond to the activation and error signals of Freq-prop, respectively. To avoid the use of finite differences to measure $\left. \frac{\partial v^\beta}{\partial \beta} \right|_{\beta=0}$, Freq-prop makes use of a time-varying nudging signal $\beta(t) = \gamma \sin(\omega t)$. With this method, the activation and error signals are encoded in the frequencies $0$ and $\omega$ of the response signal $v(t) = v^0 + \gamma \sin(\omega t) \left. \frac{\partial v^\beta}{\partial \beta} \right|_{\beta=0} + O(\gamma^3)$. The required information can thus be recovered via frequency analysis.

Very recent work proposes an alternative approach to train physical systems by gradient descent called \textit{agnostic equilibrium propagation}~\cite{scellier2022agnostic}. However, this method imposes constraints on the nature of the parameters ($\theta$), which must minimize the system's energy ($E$), just like the state variables ($v$) do. This assumption does not allow us to view the conductances of resistors as trainable parameters in a resistive network. The method also requires control knobs with the ability to perform homeostatic control over the parameters.
Our work can also be seen as a physical implementation of \textit{implicit differentiation} in physical networks. We refer to \cite{zucchet2022beyond} for a description of implicit differentiation where the authors use a mathematical formalism close to ours.

Lastly, other physical learning algorithms that make explicit use of time are being developed. For instance, recent work proposes a way to train physical systems with time reversible Hamiltonians \cite{lopez2021self}. In this method called \textit{Hamiltonian echo backpropagation} (HEB), the error signal is a time-reversed version -- an ``echo" -- of the activation signal, with the cost function acting as a perturbation on this signal.
However, HEB requires a feasible way to time-reverse the activation signal.

\section{Discussion}

We have introduced frequency propagation (Freq-prop), a physical learning algorithm that falls in the category of Multi-mechanism Learning (MmL). In MmL, separate and ``distinguishable" activation and error signals both contribute to a local learning rule, such that trainable parameters (e.g. conductances of variable resistors) perform gradient descent on a loss function.
In Freq-prop, the activation and error signals are implemented using different frequencies of a single physical quantity (e.g. voltages or currents) and are thus distinguishable.
We note however that the `distinguishability' of the signals does not mean that they are mathematically `independent':
in Freq-prop, the error signal depends on the activation signal via the Hessian of the network.
Other potential MmL algorithms may involve independent physical mechanisms, such as an electrical activation signal and a chemical error signal or vice versa.

Multi-mechanism learning algorithms, such as Freq-prop, may have implications towards designing fast and low-power, or high-efficiency, hardware for AI, as they are rooted in physical principles. For the time being, inroads are being made by using backpropagation to train controllable physical systems in a hybrid {\it in silico-in situ}  approach~\cite{wright2022deep}. As we work towards a fully in situ approach, Freq-prop is a natural candidate. And while the {\it in situ} realization of a nonlinear resistor network is an obvious starting point, there are potential limitations, particularly in terms of timescales. Consider the time of relaxation to equilibrium ($\tau_{\rm relax}$), the time scale of the sinusoidal nudging signal ($T = 2\pi/\omega$), and the time scale of learning ($\tau_{\rm learning}$). Our learning methodology requires that $\tau_{\rm relax} \ll T < \tau_{\rm learning}$. More specifically,
\begin{enumerate}
    \item Once input is applied, the network reaches equilibrium in time $\tau_{\rm relax}$.
   \item Based on the network's output, a sinusoidal nudging signal of frequency $\omega$ is applied at the output nodes. The time scale of evolution of this sinusoidal nudging wave is $T = 2\pi/\omega$. Assuming that $\tau_{\rm relax} \ll T$, the network is at equilibrium at every instant $t$. 
    \item We observe the network's response $v(t)$ for a time $\tau_{\rm obs} > T$ to extract the coefficients $a$ and $b$ of Eq.~\eqref{eq:definition-a-b}. Updating the conductances of the resistors takes a time $\tau_{\rm learning} \sim \tau_{\rm obs}$ using the values of $a$ and $b$ to determine the magnitude and sign of these updates.
\end{enumerate}

Finally, could something like Freq-prop occur in the brain? Earlier work analyzing local field potentials recorded simultaneously from different regions in the cortex suggested that feedforward signaling is carried by gamma-band (30–80 Hz) activity, whereas feedback signaling is mediated by alpha-(5–15Hz) or beta- (14–18 Hz) band activity, though local field potentials are not actively relayed between regions~\cite{bastos2015}. More recent work in the visual cortex argues that feedforward and feedback signaling rely on separate “channels” since correlations in neuronal population activity patterns, which are actively relayed between regions, are distinct during feedforward- and feedback-dominated periods~\cite{semedo2022}. Freq-prop is also related in spirit to the idea of frequency multiplexing in biological neural networks~\cite{naud2018sparse,payeur2021burst,akam_kullmann_2014}, which uses the simultaneous encoding of two or more signals.
While Freq-prop here uses only two separate signals -- an activation signal and an error signal -- one can envision multiple activation and error signals being encoded to accommodate vector inputs and outputs and to accommodate multiple, competing cost functions. With multiple activation and error signals one can also envision coupling learning via chemical signaling (Appendix~\ref{sec:chemical-signaling}) with Freq-prop, for example, to begin to capture the full computational {\it creativity} of the brain.



\acknowledgements{The authors thank Sam Dillavou, Nachi Stern and Jack Kendall for discussion. JMS acknowledges financial support from NSF-DMR-1832002.}

\bibliography{biblio.bib} 

\begin{thebibliography}{10}

\bibitem{anisetti2022learning}
V.~R. Anisetti, B.~Scellier, and J.~M. Schwarz, ``Learning by non-interfering
  feedback chemical signaling in physical networks,'' {\em arXiv preprint
  arXiv:2203.12098}, 2022.

\bibitem{goodfellow2016deep}
I.~Goodfellow, Y.~Bengio, and A.~Courville, {\em Deep learning}.
\newblock MIT press, 2016.

\bibitem{kendall2020training}
J.~Kendall, R.~Pantone, K.~Manickavasagam, Y.~Bengio, and B.~Scellier,
  ``Training end-to-end analog neural networks with equilibrium propagation,''
  {\em arXiv preprint arXiv:2006.01981}, 2020.

\bibitem{stern_arinze_perez_palmer_murugan_2020}
M.~Stern, C.~Arinze, L.~Perez, S.~E. Palmer, and A.~Murugan, ``Supervised
  learning through physical changes in a mechanical system,'' Jun 2020.

\bibitem{stern2021supervised}
M.~Stern, D.~Hexner, J.~W. Rocks, and A.~J. Liu, ``Supervised learning in
  physical networks: From machine learning to learning machines,'' {\em
  Physical Review X}, vol.~11, no.~2, p.~021045, 2021.

\bibitem{lopez2021self}
V.~Lopez-Pastor and F.~Marquardt, ``Self-learning machines based on hamiltonian
  echo backpropagation,'' {\em arXiv preprint arXiv:2103.04992}, 2021.

\bibitem{dillavou2021demonstration}
S.~Dillavou, M.~Stern, A.~J. Liu, and D.~J. Durian, ``Demonstration of
  decentralized, physics-driven learning,'' {\em arXiv preprint
  arXiv:2108.00275}, 2021.

\bibitem{scellier2022agnostic}
B.~Scellier, S.~Mishra, Y.~Bengio, and Y.~Ollivier, ``Agnostic physics-driven
  deep learning,'' {\em arXiv preprint arXiv:2205.15021}, 2022.

\bibitem{stern_review}
M.~Stern and A.~Murugan, ``Learning without neurons in physical systems,''
  2022.

\bibitem{scellier2021deep}
B.~Scellier, ``A deep learning theory for neural networks grounded in
  physics,'' {\em PhD thesis, Universit\'e de Montr\'eal}, 2021.

\bibitem{baldi1991contrastive}
P.~Baldi and F.~Pineda, ``Contrastive learning and neural oscillations,'' {\em
  Neural computation}, vol.~3, no.~4, pp.~526--545, 1991.

\bibitem{scellier2017equilibrium}
B.~Scellier and Y.~Bengio, ``Equilibrium propagation: Bridging the gap between
  energy-based models and backpropagation,'' {\em Frontiers in computational
  neuroscience}, vol.~11, p.~24, 2017.

\bibitem{pereda2014}
A.~E. Pereda, ``Electrical synapses and their functional interactions with
  chemical synapses,'' {\em Nature Reviews Neuroscience}, vol.~15, no.~4,
  pp.~250--263, 2014.

\bibitem{millar1951cxvi}
W.~Millar, ``Cxvi. some general theorems for non-linear systems possessing
  resistance,'' {\em The London, Edinburgh, and Dublin Philosophical Magazine
  and Journal of Science}, vol.~42, no.~333, pp.~1150--1160, 1951.

\bibitem{Note1}
To avoid any confusion, we stress that $\theta _{jk}$ is a scalar, whereas
  $\protect \hat {i}_{jk}(\cdot )$ is a real-valued function. Thus, $\theta
  _{jk} \left ( v_j-v_k \right )$ denotes the product of $\theta _{jk}$ and
  $v_j-v_k$, whereas $\protect \hat {i}_{jk}(v_j-v_k)$ denotes the function
  $\protect \hat {i}_{jk}$ applied to the voltage $v_j-v_k$.

\bibitem{bottou2010large}
L.~Bottou, ``Large-scale machine learning with stochastic gradient descent,''
  in {\em Proceedings of COMPSTAT'2010}, pp.~177--186, Springer, 2010.

\bibitem{Note2}
In practical situations such as the squared error prediction, the cost function
  $C$ depends only on the state of output nodes ; therefore nudging requires
  injecting currents at output nodes only.

\bibitem{cherry1951cxvii}
C.~Cherry, ``Cxvii. some general theorems for non-linear systems possessing
  reactance,'' {\em The London, Edinburgh, and Dublin Philosophical Magazine
  and Journal of Science}, vol.~42, no.~333, pp.~1161--1177, 1951.

\bibitem{zucchet2022beyond}
N.~Zucchet and J.~Sacramento, ``Beyond backpropagation: implicit gradients for
  bilevel optimization,'' {\em arXiv preprint arXiv:2205.03076}, 2022.

\bibitem{wright2022deep}
L.~G. Wright, T.~Onodera, M.~M. Stein, T.~Wang, D.~T. Schachter, Z.~Hu, and
  P.~L. McMahon, ``Deep physical neural networks trained with
  backpropagation,'' {\em Nature}, vol.~601, no.~7894, pp.~549--555, 2022.

\bibitem{bastos2015}
A.~M. Bastos, J.~Vezoli, C.~A. Bosman, J.-M. Schoffelen, R.~Oostenveld, J.~R.
  Dowdall, P.~De~Weerd, H.~Kennedy, and P.~Fries, ``Visual areas exert
  feedforward and feedback influences through distinct frequency channels,''
  {\em Neuron}, vol.~85, no.~2, pp.~390--401, 2015.

\bibitem{semedo2022}
J.~D. Semedo, A.~I. Jasper, A.~Zandvakili, A.~Krishna, A.~Aschner, C.~K.
  Machens, A.~Kohn, and B.~M. Yu, ``Feedforward and feedback interactions
  between visual cortical areas use different population activity patterns,''
  {\em Nature communications}, vol.~13, no.~1, pp.~1--14, 2022.

\bibitem{naud2018sparse}
R.~Naud and H.~Sprekeler, ``Sparse bursts optimize information transmission in
  a multiplexed neural code,'' {\em Proceedings of the National Academy of
  Sciences}, vol.~115, no.~27, pp.~E6329--E6338, 2018.

\bibitem{payeur2021burst}
A.~Payeur, J.~Guerguiev, F.~Zenke, B.~A. Richards, and R.~Naud,
  ``Burst-dependent synaptic plasticity can coordinate learning in hierarchical
  circuits,'' {\em Nature neuroscience}, vol.~24, no.~7, pp.~1010--1019, 2021.

\bibitem{akam_kullmann_2014}
T.~Akam and D.~M. Kullmann, ``Oscillatory multiplexing of population codes for
  selective communication in the mammalian brain,'' Jan 2014.

\end{thebibliography}
\bibliographystyle{ieeetr}

\appendix

\section{Further Details on the Nudging Signal}
\label{sec:sign-problem}

Let $f(t)$ denote the nudging signal. Assuming that $f$ is bounded, recall that, for every $j$, the measured response $v_j(t)$ at node $j$ is of the form $v_j(t)=a_j+b_j f(t+t_j) + O(\gamma^2)$, where $a_j$ and $b_j$ are the numbers that we wish to recover (up to $O(\gamma^2)$) to implement the parameter update, and $t_j$ is an unknown time delay. Our goal is to obtain for every node $j$ the values of $a_j$ and $b_j$ by measuring only $v_j(t)$, without any knowledge of $t_j$.

We now establish a necessary and sufficient condition on the nudging signal $f(t)$ so that one can, at least in principle, uniquely obtain the values of $a_j$ and $b_j$ for every node $j$. We are concerned with quantities that depend only on a single node and hence we will drop the node index with the understanding that all of the analysis applies to any arbitrary node.

Let $F$ denote the set of all real-valued, bounded functions,
and let $f$ be an element of $F$.
Let $\mathcal{C}_f:\mathbb{R}^3 \rightarrow F$ be the function that maps the parameters $(a,b,t_0)$ to the function $v(\cdot) = a + bf(\cdot+t_0)$.
We define the following equivalence relation on $F$: two functions $g, h \in F$ are equivalent if they differ by a time translation, i.e., $g \sim h$ if and only if there exists a $t_0 \in \mathbb{R}$ such that $g(t)=h(t+t_0)$ for all $t \in \mathbb{R}$. Let $\tilde{F}=F/\sim$ be the quotient of $F$ under this equivalence relation and let $[g]$ be the equivalence class that contains the function $g$.
The map $\mathcal{C}_f$ can be lifted to yield $\mathcal{\tilde{C}}_{f}:\mathbb{R}^2 \rightarrow \tilde{F}$ such that $\tilde{\mathcal{C}}_{f}(a,b)= [a+bf]$. In order to be able to uniquely extract $a$ and $b$ from any equivalence class of the form $[a+bf]$, the function $\tilde{\mathcal{C}}_{f}$ has to be injective. This can be re-expressed as a direct condition on the nudging signal $f$.

\begin{prop}
The following statements are equivalent:
\begin{compactenum}
\item [P1:] The function $\mathcal{\tilde{C}}_{f}:\mathbb{R}^2 \rightarrow \tilde{F}$ defined by $\tilde{\mathcal{C}}_{f}(a,b)=[a+bf]$ is injective.
\item [P2:]  There exists no $\tau \in \mathbb{R}$ such that for all $t\in \mathbb{R}$,\\ $f(t) = \sup f + \inf f - f(t+\tau)$.
\end{compactenum}
where $\sup{f} = \sup_t f(t)$ and $\inf{f}=\inf_t f(t)$ denote the supremum and infimum values of the nudging signal $f$ respectively.
\end{prop}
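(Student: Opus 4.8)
The plan is to prove the contrapositive equivalence $\neg\text{P1}\iff\neg\text{P2}$, after first unpacking what injectivity of $\tilde{\mathcal C}_f$ means concretely. By definition $\tilde{\mathcal C}_f(a,b)=\tilde{\mathcal C}_f(a',b')$ holds exactly when $a+bf\sim a'+b'f$, i.e. when there is a shift $t_0$ with $a+bf(t)=a'+b'f(t+t_0)$ for all $t$. Hence $\neg\text{P1}$ is the assertion that such a relation holds for some genuinely distinct pair $(a,b)\ne(a',b')$. I would first dispose of the degenerate sub-cases: if $b=0$ or $b'=0$, the relation either forces $(a,b)=(a',b')$ or forces $f$ to be constant, and a constant $f$ trivially satisfies the identity in P2, so $\neg\text{P2}$ holds. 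In the remaining case $b,b'\ne0$ I would rewrite the relation as an affine self-similarity under a time shift,
\[
 f(t+t_0)=\lambda f(t)+\mu,\qquad \lambda=b/b'\ne0,\quad \mu=(a-a')/b',
\]
where the condition $(a,b)\ne(a',b')$ translates precisely into $(\lambda,\mu)\ne(1,0)$. Thus $\neg\text{P1}$ reduces to: $f$ admits a nontrivial affine self-similarity.

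The easy direction is $\neg\text{P2}\Rightarrow\neg\text{P1}$. Writing $M=\sup f$ and $m=\inf f$, if $f(t)=M+m-f(t+\tau)$ then the distinct pairs $(a,b)=(M+m,-1)$ and $(a',b')=(0,1)$ map to the same equivalence class (take $t_0=\tau$, so that $(M+m)-f(t)=f(t+\tau)$), witnessing that $\tilde{\mathcal C}_f$ is not injective.

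The substantive direction is $\neg\text{P1}\Rightarrow\neg\text{P2}$, and this is where I expect the main work. The key observation is that a time shift leaves the range $R=f(\mathbb R)$ unchanged, so the affine map $T(y)=\lambda y+\mu$ satisfies $T(R)=R$; since $T$ is a homeomorphism of $\mathbb R$ it commutes with closure, giving $T(\overline R)=\overline R$, and $\overline R$ is a closed set with (attained) supremum $M$ and infimum $m$, both finite by boundedness. I would then split on the sign of $\lambda$. If $\lambda>0$, $T$ is increasing and therefore fixes the extremes, $T(M)=M$ and $T(m)=m$; subtracting the two equations yields $\lambda(M-m)=M-m$, which for non-constant $f$ forces $\lambda=1,\mu=0$, contradicting nontriviality, so $f$ must be constant and P2 fails anyway. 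If $\lambda<0$, $T$ is decreasing and swaps the extremes, $T(M)=m$ and $T(m)=M$; solving gives $\lambda=-1$ and $\mu=M+m$ (for non-constant $f$), so $f(t+t_0)=M+m-f(t)$, which after the substitution $\tau=-t_0$ is exactly the identity in P2. In every branch $\neg\text{P2}$ holds, completing the equivalence.

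The main obstacle is making the extremal argument rigorous when $\sup f$ and $\inf f$ are not attained by $f$ itself: the clean fix is to work with the closure $\overline R$ of the range, where the supremum and infimum are genuinely attained, and to use that an affine homeomorphism commutes with closure and either preserves or reverses order, so it carries the max and min of $\overline R$ to the max and min of $\overline R$. I would also carefully verify the bookkeeping that $(a,b)\ne(a',b')\Leftrightarrow(\lambda,\mu)\ne(1,0)$, and confirm that the constant-$f$ case is consistently routed to $\neg\text{P2}$ in both the degenerate sub-cases and the $\lambda>0$ branch.
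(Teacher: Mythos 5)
Your proposal is correct and follows essentially the same route as the paper's proof: negate both statements, exhibit the witness pair $(0,1)$ and $(\sup f + \inf f, -1)$ for the easy direction, dispose of the degenerate $b=0$ cases, reduce the hard direction to an affine self-similarity $f(t) = a + b\, f(t+\tau)$, and split on the sign of the coefficient to obtain the extremal equations forcing either constancy or $b=-1$, $a = \sup f + \inf f$. The only difference is technical bookkeeping: the paper derives $\sup f = a + b \sup f$ (and its companions) by taking suprema of both sides of the functional equation directly --- valid without any attainment, since $\sup_t\left(a + b\, g(t)\right) = a + b \sup_t g(t)$ for $b>0$ and $a + b \inf_t g(t)$ for $b<0$ --- so your detour through the closure of the range and compactness is sound but unnecessary.
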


\begin{proof}
We establish this by proving that the negation of the two statements are equivalent, i.e., the following statements are equivalent:
\begin{compactenum}
    \item [N1:]
    There exists two distinct pairs of real numbers $(a_1,b_1)$ and $(a_2,b_2)$ such that $[a_1+b_1f]=[a_2+b_2f]$.
    \item [N2:] There exists a $\tau \in \mathbb{R}$ such that for all $t\in \mathbb{R}$,\\ $f(t)=\sup{f}+\inf{f}-f(t+\tau)$.
\end{compactenum}

Suppose that N2 is true: there is a $\tau \in \mathbb{R}$ such that for all $t\in \mathbb{R}$, $f(t)=\sup{f}+\inf{f}-f(t+\tau)$. This means that $f$ and $\sup{f}+\inf{f}-f$ are related by a time translation, i.e. $[f]=[\sup{f}+\inf{f}-f]$. Therefore, N1 is true, with $(a_1,b_1) = (0,1)$ and $(a_2,b_2) = (\sup{f}+\inf{f},-1)$.

Conversely, suppose that N1 is true:
there exists two distinct pairs of real numbers $(a_1,b_1)$ and $(a_2,b_2)$ and a $\tau\in\mathbb{R}$ such that
\begin{equation}
    \forall t\in \mathbb{R}, \qquad a_1+b_1f(t)=a_2+b_2f(t+\tau).
\end{equation}
The numbers $b_1$ and $b_2$ cannot be both zero, otherwise the above equation implies that $a_1=a_2$, a contradiction. If $b_1=0$ and $b_2 \neq 0$, the above equation implies that $f$ is a constant, in which case N2 is clearly true. Otherwise $b_1 \neq 0$ and we can re-write the above equality as
\begin{equation}
    \label{eq:details-0}
    \forall t\in \mathbb{R}, \qquad f(t)=a+bf(t+\tau)
\end{equation}
with $a=(a_2-a_1)/b_1$ and $b=b_2/b_1$. Now there are two possibilities: either $b>0$ or $b<0$.

First, let us suppose that $b>0$. The above equality imposes the following conditions on the minimum and maximum values of the function $f$:
\begin{align}
    \label{eq:details-1}
    \sup f & = a + b \sup f,\\
    \label{eq:details-2}
    \inf f & = a + b \inf f.
\end{align}
Subtracting \eqref{eq:details-2} from \eqref{eq:details-1} and reorganizing the terms we get $(1-b) (\sup f - \inf f) = 0$. If $b=1$, then $a=0$, contradicting our assumption that $(a_1,b_1)$ and $(a_2,b_2)$ are distinct pairs. Therefore $\sup f = \inf f$, $f$ is constant and N2 is true.

Second, let us suppose that $b<0$. As before we have
\begin{align}
    \sup f & = a + b \inf f,\\
    \inf f & = a + b \sup f,
\end{align}
and again $(1+b) (\sup f - \inf f) = 0$. Either $f$ is a constant, or $b=-1$, impliying in turn that $a = \sup f + \inf f$. Therefore, coming back to \eqref{eq:details-0}, we have $f(t)=\sup{f}+\inf{f}-f(t+\tau)$ for all $t\in \mathbb{R}$, which is the statement of N2.
\end{proof}

\section{Multi-Mechanism Learning via Chemical Signaling}
\label{sec:chemical-signaling}

In this appendix, we generalize the learning algorithm via \textit{chemical signaling} \cite{anisetti2022learning} to nonlinear networks. 
Learning via chemical signaling is another example of \textit{multi-mechanism learning} in physical networks. It uses pressures and chemical concentrations to implement a local learning rule. This way of using multiple independent ``mechanisms" is the central idea behind multi-mechanism learning.

Consider a flow network, i.e. a network of nodes interconnected by tubes. A flow network is formally equivalent to the resistive network of Section \ref{sec:resistive-network}, with $v$ being the configuration of node pressures, and $\theta_{jk}$ being the conductance of the branch between nodes $j$ and $k$.

Learning via chemical signaling proceeds as follows. In the first phase, given $\theta$ and input signals $x$, the configuration of node pressures stabilizes to its equilibrium value $v(\theta,x)$ given by
\begin{equation}
    v(\theta,x) = \underset{v}{\arg \min} \; E(\theta,x,v).
\end{equation}
In the second phase, we inject chemical currents $e = - \beta \frac{\partial C}{\partial v}(v(\theta,x),y)$ at output nodes, where $\beta$ is a (positive or negative) constant. As a result, a chemical concentration $u$ develops at each node. Assuming that the configuration of node pressures $v(\theta,x)$ is not affected by the chemical, the chemical concentration $u$ at equilibrium satisfies the relationship:
\begin{equation}
    \label{eq:chemical-concentration}
    \frac{\partial^2 E}{\partial v^2}(\theta,x,v(\theta,x)) \cdot u = - \beta \frac{\partial C}{\partial v}(v(\theta,x),y).
\end{equation}
Indeed, diffusion along a tube follows the same equation as that of flow along the same tube, up to a constant factor (replacing node pressures and flow conductivity by chemical concentration and diffusion constant, respectively).
When there is no ambiguity from the context, we write $v = v(\theta,x)$ for simplicity. We note that, although $v$ is not affected by the chemical, $u$ depends on $v$. In particular $u$ also depends on $\theta$ and $x$ through $v$.


Next, denoting $u = (u_1,u_2,\ldots,u_N)$, we update each parameter $\theta_{jk}$ according to the learning rule
\begin{equation}
    \label{eq:learning-rule-chemical-signaling}
    \Delta \theta_{jk} = - \alpha (u_j-u_k) \cdot (v_j-v_k),
\end{equation}
where $\alpha$ is some constant. Note that this learning rule is local (just like the learning rule of Freq-prop), requiring only information about nodes $j$ and $k$ for each conductance $\theta_{jk}$.

\begin{thm}
\label{thm:chemical-signaling}
For every parameter $\theta_{jk}$, it holds that
\begin{equation}
    \Delta \theta_{jk} = - \alpha \, \beta \frac{\partial \mathcal{L}}{\partial \theta_{jk}}(\theta,x,y).
\end{equation}
\end{thm}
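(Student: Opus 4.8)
The plan is to show that the chemical concentration $u$ coincides, up to the factor $\beta$, with the sensitivity of the equilibrium pressures to the nudge, and then to feed this identity into the equilibrium-propagation gradient formula already used in the proof of Theorem~\ref{thm:frequency-propagation}. First I would recall that formula: writing $v_\star^\beta = \arg\min_v [E(\theta,x,v) + \beta C(v,y)]$ and $v = v_\star^0 = v(\theta,x)$, it gives
\begin{equation}
    \frac{\partial \mathcal{L}}{\partial \theta}(\theta,x,y) = \frac{\partial^2 E}{\partial \theta \partial v}(\theta,x,v) \cdot \left. \frac{\partial v_\star^\beta}{\partial \beta} \right|_{\beta=0}.
\end{equation}

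Next I would identify $u$ with $\beta \left. \partial_\beta v_\star^\beta \right|_{\beta=0}$. To do this I would differentiate the stationarity condition $\frac{\partial E}{\partial v}(\theta,x,v_\star^\beta) + \beta \frac{\partial C}{\partial v}(v_\star^\beta,y) = 0$ with respect to $\beta$ and evaluate at $\beta=0$, obtaining
\begin{equation}
    \frac{\partial^2 E}{\partial v^2}(\theta,x,v) \cdot \left. \frac{\partial v_\star^\beta}{\partial \beta} \right|_{\beta=0} = - \frac{\partial C}{\partial v}(v,y).
\end{equation}
Multiplying through by $\beta$ and comparing with the defining relation~\eqref{eq:chemical-concentration} for the chemical concentration, the invertibility of the Hessian $\frac{\partial^2 E}{\partial v^2}$ at equilibrium yields $u = \beta \left. \partial_\beta v_\star^\beta \right|_{\beta=0}$ exactly.

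Substituting this into the gradient formula gives $\beta \frac{\partial \mathcal{L}}{\partial \theta}(\theta,x,y) = \frac{\partial^2 E}{\partial \theta \partial v}(\theta,x,v) \cdot u$. The final step would be to evaluate the cross-derivative $\frac{\partial^2 E}{\partial \theta_{jk} \partial v}$ using the explicit form~\eqref{eq:energy-function-1}: since only the branch term $\frac{1}{2}\theta_{jk}(v_j - v_k)^2$ depends on $\theta_{jk}$, one has $\frac{\partial E}{\partial \theta_{jk}} = \frac{1}{2}(v_j - v_k)^2$, whose gradient in $v$ has the single nonzero pair of entries $(v_j - v_k)$ at node $j$ and $-(v_j - v_k)$ at node $k$. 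Contracting against $u$ collapses the sum to $(v_j - v_k)(u_j - u_k)$, so that $\beta \frac{\partial \mathcal{L}}{\partial \theta_{jk}} = (v_j - v_k)(u_j - u_k)$, and the learning rule~\eqref{eq:learning-rule-chemical-signaling} becomes exactly $\Delta \theta_{jk} = -\alpha \beta \frac{\partial \mathcal{L}}{\partial \theta_{jk}}$.

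I expect the main subtlety to be the identification of $u$ with $\beta \left. \partial_\beta v_\star^\beta \right|_{\beta=0}$, which relies on the stated assumption that the chemical does not perturb the pressures $v$ and on the invertibility of the Hessian at equilibrium. Unlike Theorem~\ref{thm:frequency-propagation}, no second-order Taylor expansion or Fourier analysis is needed here: because the chemical concentration is governed exactly by the linear equation~\eqref{eq:chemical-concentration} rather than by a finite-amplitude nudge, the identification is exact, and the conclusion holds as a genuine equality rather than up to $O(\beta^3)$.
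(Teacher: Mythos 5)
Your proof is correct, but it follows a genuinely different route from the paper's. The paper's proof never introduces the nudged state $v_\star^\beta$ at all: it differentiates the equilibrium condition $\frac{\partial E}{\partial v}(\theta,x,v(\theta,x))=0$ with respect to $\theta$, multiplies the resulting identity on the left by $u^\top$, separately multiplies the defining relation \eqref{eq:chemical-concentration} on the left by $\frac{\partial v}{\partial \theta}(\theta,x)^\top$, and compares the two expressions using the symmetry of the Hessian $\frac{\partial^2 E}{\partial v^2}$ together with the chain rule $\frac{\partial \mathcal{L}}{\partial \theta} = \frac{\partial v}{\partial \theta}^\top \frac{\partial C}{\partial v}$ --- a self-contained, adjoint-style computation in which everything is differentiated with respect to $\theta$. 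You instead differentiate with respect to the nudging strength $\beta$, identify $u$ with $\beta \left. \partial_\beta v_\star^\beta \right|_{\beta=0}$, and then invoke the equilibrium-propagation gradient formula already cited in the proof of Theorem~\ref{thm:frequency-propagation}. Both arguments are valid, and both conclude with the same explicit evaluation of the cross-derivative $\frac{\partial^2 E}{\partial \theta_{jk}\partial v}$. Your route buys a conceptual dividend the paper does not make explicit: the chemical concentration \emph{is} exactly the equilibrium-propagation error signal scaled by $\beta$, which makes the kinship between chemical signaling and Freq-prop transparent and reuses machinery already present in the paper. What the paper's route buys is slightly weaker hypotheses: it uses only that $u$ satisfies \eqref{eq:chemical-concentration} and that $v(\theta,x)$ is differentiable, so it goes through for \emph{any} solution $u$ of that equation even when the Hessian is singular (in which case \eqref{eq:chemical-concentration} does not pin $u$ down uniquely), whereas your identification step genuinely requires invertibility of $\frac{\partial^2 E}{\partial v^2}$ at equilibrium, as you yourself flag. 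Your closing observation --- that the result is an exact equality with no $O(\beta^3)$ remainder because no Taylor expansion or Fourier extraction is involved --- is correct and matches the theorem's statement.
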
 

Namely, the learning rule of Eq.~\eqref{eq:learning-rule-chemical-signaling} performs one step of gradient descent with respect to the loss, with step size $\alpha \beta$.
We note that learning via chemical signaling comes in two variants, either with $\beta > 0$ and $\alpha > 0$, or with $\beta < 0$ and $\alpha < 0$. The procedure performs one step of gradient \textit{descent} as long as the product $\alpha \beta$ is positive.

\bigskip

\begin{proof}[Proof of Theorem \ref{thm:chemical-signaling}]
First, we write the first-order equilibrium condition for $v(\theta,x)$, which is
\begin{equation}
    \label{eq:equilibrium-v}
    \frac{\partial E}{\partial v}(\theta,x,v(\theta,x)) = 0.
\end{equation}
We differentiate this equation with respect to $\theta$:
\begin{equation}
    \frac{\partial^2 E}{\partial v^2}(\theta,x,v(\theta,x)) \frac{\partial v}{\partial \theta}(\theta,x) + \frac{\partial^2 E}{\partial v \partial \theta}(\theta,x,v(\theta,x)) = 0.
\end{equation}
Multiplying both sides on the left by $u^\top$ we get
\begin{equation}
    \label{eq:proof-1}
    u^\top \frac{\partial^2 E}{\partial v^2}(\theta,x,v(\theta,x)) \frac{\partial v}{\partial \theta}(\theta,x) + u^\top \frac{\partial^2 E}{\partial v \partial \theta}(\theta,x,v(\theta,x)) = 0.
\end{equation}
On the other hand, multiplying both sides of \eqref{eq:chemical-concentration} on the left by $\frac{\partial v}{\partial \theta}(\theta,x)^\top$, we get
\begin{align}
    \frac{\partial v}{\partial \theta}(\theta,x)^\top \frac{\partial^2 E}{\partial v^2}(\theta,x,v(\theta,x)) u & = - \beta \frac{\partial v}{\partial \theta}(\theta,x)^\top \frac{\partial C}{\partial v}(v(\theta,x),y) \\
    & = - \beta \frac{\partial \mathcal{L}}{\partial \theta}(\theta,x,y)
    \label{eq:proof-2}
\end{align}
Comparing \eqref{eq:proof-1} and \eqref{eq:proof-2} we conclude that
\begin{equation}
    u^\top \frac{\partial^2 E}{\partial v \partial \theta}(\theta,x,v(\theta,x)) = \beta \frac{\partial \mathcal{L}}{\partial \theta}(\theta,x,y).
\end{equation}
Finally, using the form of the energy function \eqref{eq:energy-function-1}, we have for each parameter $\theta_{ij}$
\begin{equation}
    (u_i-u_j) \cdot (v_i-v_j) = \beta \frac{\partial \mathcal{L}}{\partial \theta_{ij}}(\theta,x,y).
\end{equation}
Therefore the learning rule
\begin{equation}
    \Delta \theta_{jk} = - \alpha (u_i-u_j) \cdot (v_i-v_j)
\end{equation}
satisfies
\begin{equation}
    \Delta \theta_{jk} = - \alpha \, \beta \frac{\partial \mathcal{L}}{\partial \theta_{jk}}(\theta,x,y). 
\end{equation}
Hence the result.
\end{proof}

\end{document}